\theoremstyle{plain}
\newtheorem{theorem}{Theorem}
\newtheorem{lemma}[theorem]{Lemma}
\newcommand{\cut}{{$cut$}}
\newcommand{\da}{\downarrow}
\author[1]{Pawe\l{} Gawrychowski}
\author[2]{Shay Mozes}
\author[3]{Oren Weimann}
\affil[1]{
University of Wroc\l{}aw, Poland\\
\href{mailto:gawry@cs.uni.wroc.pl}{gawry@cs.uni.wroc.pl}}
\affil[2]{
The Interdisciplinary Center Herzliya, Israel\\
\href{mailto:smozes@idc.ac.il}{smozes@idc.ac.il}}
\affil[3]{
University of Haifa, Israel\\
\href{mailto:oren@cs.haifa.ac.il}{oren@cs.haifa.ac.il}
}
\date{}
\title{A Note on a Recent Algorithm for Minimum Cut}
\begin{document}

\maketitle

\thispagestyle{empty}

\begin{abstract}

Given an undirected edge-weighted graph $G=(V,E)$ with $m$ edges and $n$ vertices, the minimum cut problem asks to find a subset of vertices $S$ such that the total weight of all edges between $S$ and $V \setminus S$ is minimized. Karger's longstanding $O(m \log^3 n)$ time randomized algorithm for this problem was very recently improved in two independent works to $O(m \log^2 n)$ [ICALP'20] and to $O(m \log^2 n + n\log^5 n)$ [STOC'20]. These two algorithms use different approaches and techniques. In particular, while the former is faster, the latter has the advantage that it can be used to obtain efficient algorithms in the cut-query and in the streaming models of computation. 
In this paper, we show how to simplify and improve the algorithm of [STOC'20] to $O(m \log^2 n  + n\log^3 n)$. We obtain this by replacing a randomized algorithm that, given a spanning tree $T$ of $G$, finds in $O(m \log n+n\log^4 n)$ time a minimum cut of $G$ that 2-respects (cuts two edges of) $T$ with a simple $O(m \log n+n\log^2 n)$ time deterministic algorithm for the same problem.
 
\end{abstract}

\section{Introduction}

In his seminal work in 1996,  Karger~\cite{Karger} showed how to find the minimum cut of an  edge-weighted undirected graph in $O(m\log^3 n)$ time. The first step of his algorithm is a procedure that, given an undirected edge-weighted graph $G$, produces in $O(m + n \log^3 n)$ time a collection of $O(\log n)$ spanning trees of $G$ such that w.h.p the minimum cut {\em 1-} or {\em 2-respects} some tree in the collection.
That is, one of the trees is such that at most two of its edges cross the minimum cut (these edges are said to {\em determine} the cut). The minimum cut is then found by examining each tree $T$ of the $O(\log n)$ trees and finding the minimum cut that 1- or 2-respects $T$.  
Since the minimum cut that 1-respects $T$ can be easily found in $O(m)$ time~\cite[Lemma 5.1]{Karger}, the main challenge is to find the minimum cut that 2-respects $T$. 

Karger showed that the minimum cut that 2-respects a given tree can be found in $O(m \log^2 n)$ time. 
This was very recently improved in two independent works: In~\cite{ourICALP2020} we obtained an $O(m \log n)$ deterministic algorithm,\footnote{We also showed in~\cite{ourICALP2020} that the first step of producing the collection of spanning trees can be performed (using a randomized algorithm) in $O(m \log^2 n)$ time, leading to an $O(m \log^2 n)$ time randomized algorithm for min cut.}  and in~\cite{Danupon} Mukhopadhyay and Nanongkai obtained an $O(m \log n + n\log^4 n)$ randomized algorithm. 
These two results use different techniques.
Even though~\cite{ourICALP2020} dominates~\cite{Danupon} for the entire range of graphs densities, the approach of~\cite{Danupon} has two notable advantages: (1) it can be extended to other models, namely to find the minimum cut using $\tilde O(n)$ cut queries, or using $\tilde O(n)$ space and $O(\log n)$ passes in a streaming algorithm, and (2) the approach of~\cite{Danupon} can be seen as a reduction from min cut to geometric two-dimensional orthogonal range counting/sampling/reporting data structures~\cite{Chaz88}. Therefore, special cases or future improvements of such data structures will imply improvements to min cut. For example, for the special case of {\em unweighted} undirected graphs, \cite{Danupon} use an improved data structure for orthogonal range counting~\cite{CP10} and range rank/select~\cite{BGKS15}
(that are then used to design an improved data structure for orthogonal range reporting and, finally, orthogonal range sampling).
This yields an $O(m \sqrt{\log n}+n\log^4 n)$ time randomized algorithm for finding a 2-respecting min cut in unweighted graphs, 
and hence an $O(m \log^{3/2} n+n\log^5 n)$ time algorithm for min cut in such graphs.

\paragraph{Our results.}
In this paper we show how to simplify the algorithm of Mukhopadhyay and Nanongkai, improve its running time to $O(m \log n + n\log^2 n)$, and turn it deterministic. By Karger's reasoning, this then implies a randomized min cut algorithm working in $O(m\log^{2}n+n\log^{3}n)$ time.
In fact, one can also apply his $\log\log n$ speedup that exploits the fact that 1-respecting cuts can be found faster
than 2-respecting cuts. 
As explained in~\cite[Section 4]{usarxiv}, by appropriately tweaking the parameters we
can obtain a randomized min cut algorithm working in $O(m\log^{2}n/\log\log n+n\log^{3+\epsilon}n)$ time.
Interestingly, with our improvement, the reduction to the geometric data structure is now a clean black box reduction to just orthogonal range counting (no sampling/reporting is required). This allows us to obtain the following new results:
(1) an $O(m\log n + n^{1+\epsilon})$-time randomized algorithm for min cut in weighted graphs for any fixed $\epsilon >0$ (this dominates all previous results for $m = \Omega(n^{1+\epsilon})$), and (2) an $O(m \log^{3/2} n+n\log^3 n)$ time randomized algorithm for min cut in unweighted graphs.

\paragraph{Roadmap.} In Section~\ref{sec:prelims} we describe the algorithm of Mukhopadhyay and Nanongkai~\cite{Danupon}. Our description uses slightly different terminology than~\cite{Danupon}, but all the ideas described in Section~\ref{sec:prelims} are taken from~\cite{Danupon}. In Section~\ref{sec:Simplification} we describe our simplification of~\cite{Danupon} and in Section~\ref{sec:Improvements} we show how to use it to achieve faster algorithms for unweighted graphs and for dense graphs. 

\section{The Algorithm of Mukhopadhyay and Nanongkai~\cite{Danupon}}
\label{sec:prelims}
In this section we describe the algorithm of Mukhopadhyay and Nanongkai~\cite{Danupon} for finding the pair of edges $\{e,e'\}$ determining the minimum cut (observe that the cut determined by $\{e,e'\}$ is unique and consists of all edges $(u,v) \in G$ such that the $u$-to-$v$ path in $T$ contains exactly one of $\{e,e'\}$). 

The algorithm begins by partitioning the tree $T$ into a set $\mathcal{P}$ of edge-disjoint paths (called {\em heavy paths}~\cite{LinkCutTree}) such that any root-to-leaf path in $T$ intersects at most $\log n$ heavy paths. 

\subsection{Two edges in the same heavy path}\label{subsection:same}
Consider first the case where the minimum cut is determined by two edges of the same path $P\in \mathcal{P}$. Finding these two edges then boils down to finding the smallest element in the $(\ell-1)\times (\ell-1)$ matrix $M$ where $\ell$ is the length of $P$ and $M[i,j]$ is the weight of the cut determined by the $i$'th and the $j$'th edges of $P$. An important contribution of Mukhopadhyay and Nanongkai is in observing that the matrix $M$ is a \emph{Partial Monge} matrix. That is, for any $i\neq j$, it holds that $M[i,j] - M[i,j+1] \ge M[i+1,j] - M[i+1,j+1]$.\footnote{Mukhopadhyay and Nanongkai reversed the order of rows so in their presentation the condition was $M[i,j] - M[i,j+1] \le M[i+1,j] - M[i+1,j+1]$.} They then describe an algorithm that finds the smallest element in $M$ by inspecting only $O(\ell \cdot  \log^2 \ell)$ entries of $M$. Instead, one could use the faster algorithm by Klawe and Kleitman~\cite{KK89} that requires only $O(\ell\cdot \alpha(\ell))$ inspections (where $\alpha$ is the inverse-Ackermann function).   Lemma~\ref{lemma:Chazelle} below shows that each inspection can be done in $O(\log n)$ time.
Thus, in $O(\ell \cdot  \alpha(\ell) \cdot \log n)$ time one can find the minimum cut determined by two edges of $P$. Since paths in $\mathcal{P}$ are disjoint, doing this for all paths in $\mathcal{P}$ this takes overall $O(n \cdot  \alpha(n) \cdot \log n)$ time. 

\subsection{Two edges in different heavy paths}\label{subsection:different}
Now consider the case where the minimum cut is determined by two edges belonging to different paths in $\mathcal{P}$. Another significant insight of Mukhopadhyay and Nanongkai is that  there is no  need to check every pair of paths $P,Q \in \mathcal{P}$ but only a small subset of {\em interesting} path pairs as explained next. 
Let \cut($e$,$e'$) denote the weight of the cut determined by edges $\{e,e'\}$.
Let $T_{e}$ denote the subtree of $T$ rooted at the lower (i.e., further from the root) endpoint of $e$.
If $e' \in T_e$ then we say that $e'$ is a {\em descendant} of $e$. If $e'$ is not a descendant of $e$ and $e$ is not a descendant of $e'$ then we say that $e$ and $e'$ are {\em independent}. 

\paragraph{Cross-interested edges.} An edge $e \in T$ is said to be {\em cross-interested} in  an edge $e' \in T \setminus T_e$ if $$w(T_e)<2w(T_{e},T_{e'})$$ where $w(T_{e})$ is the total weight of edges between $T_{e}$ and $V(G) \setminus V(T_{e})$ and $w(T_{e},T_{e'})$ is the total weight of edges between $T_{e}$ and $T_{e'}$. That is, $e$ is cross-interested in $e'$ if more than half the edge weight going out of $T_e$ goes into $T_{e'}$. 
Observe that if the minimum cut is determined by independent edges $\{e,e'\}$ then $e$ must be cross-interested in $e'$ (and vice versa) because otherwise \cut($e$,$e'$) = $w(T_{e}) + w(T_{e'}) - 2w(T_{e},T_{e'}) >  w(T_{e'})$ (i.e. the cut determined by the single edge $e'$ has smaller weight, a contradiction).
This means that there is no need to check every pair of independent edges, only ones that are cross-interested. 
It is easy to see that for any tree-edge $e$, all the edges that $e$ is cross-interested in form a single path $C_e$ in $T$ going down from the root to some node $c_e$.

\paragraph{Down-interested edges.}  
An edge $e \in T$ is said to be {\em down-interested} in  an edge $e' \in T_e$ if $$w(T_{e})<2w(T_{e'},T\setminus T_{e})$$ where $w(T_{e'},T\setminus T_{e})$ is the total weight of edges between $T_{e'}$ and $V(G)\setminus V(T_{e})$. That is, $e$ is down-interested in $e'$ if more than half the edge weight going out of $T_e$ originates in $T_{e'}$.
Observe that if the minimum cut is determined by edges $e$ and $e'$ where $e'$ is a descendant of $e$, then $e$ must be down-interested in $e'$ because otherwise \cut($e$,$e'$) = $w(T_{e}) + w(T_{e'}) - 2w(T_{e'},T \setminus T_e) >  w(T_{e'})$ (again, a contradiction). 
For convenience, define that $e$ is down-interested in all of its ancestor edges.
This means that we only need to check pairs of descendant edges that are down-interested in each other. Furthermore, for any tree-edge $e$, all the edges that $e$ is down-interested in form a single path $D_e$ in $T$ going down from the root to some node $d_e$.

A third important realization of Mukhopadhyay and Nanongkai is that a geometric range searching data structure of Chazelle~\cite{Chaz88} can be used to efficiently determine whether an edge $e$ is interested in an edge $e'$. This is described in the following lemma.
\begin{lemma}\label{lemma:Chazelle} 
Given a graph $G$ and a spanning tree $T$, we can construct in $O(m\log n)$ time a data structure that, given any two edges $e,e'$, can report in $O(\log n)$ time (1) the value  \cut($e$,$e'$), (2) whether $e$ is cross-interested in $e'$, and (3) whether $e$ is down-interested in $e'$.
\end{lemma}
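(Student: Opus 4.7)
The plan is to reduce each of the three queries to $O(1)$ two-dimensional orthogonal range sum queries on a single static point set of size $O(m)$, and then invoke a standard weighted range-sum data structure supporting $O(m \log n)$ preprocessing and $O(\log n)$ query time (for example a persistent search tree over one coordinate, swept across the other).

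As preprocessing I would root $T$ arbitrarily and compute a DFS preorder numbering $\pi$, so that every subtree $T_e$ corresponds to a contiguous interval $I_e \subseteq [1,n]$ and one can decide in $O(1)$ whether $e,e'$ are independent or in ancestor-descendant relation by comparing $I_e$ and $I_{e'}$. I would also precompute $w(T_e)$ for every tree edge $e$ in total $O(m+n)$ time, via the standard offline trick: for each non-tree edge $(u,v)$ of weight $w$, deposit $+w$ at $u$ and at $v$ and $-2w$ at their lowest common ancestor, then sum these contributions bottom-up.

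The remaining ingredients are $w(T_e, T_{e'})$, used when $e$ and $e'$ are independent, and $w(T_{e'}, T \setminus T_e)$, used when $e'$ is a descendant of $e$. To handle both I would build a 2D point set by placing, for every non-tree edge $(u,v)$ of weight $w(u,v)$, a weighted point at $(\pi(u), \pi(v))$ and a mirrored copy at $(\pi(v), \pi(u))$, each with weight $w(u,v)$. For disjoint $I_e$ and $I_{e'}$, the quantity $w(T_e, T_{e'})$ then equals the weighted sum over the rectangle $I_e \times I_{e'}$, since exactly one of the two representative points of each cross-edge lies in this rectangle. For $I_{e'} \subseteq I_e$, the quantity $w(T_{e'}, T \setminus T_e)$ equals the weighted sum over $I_{e'} \times ([1,n] \setminus I_e)$, which decomposes into at most two axis-aligned rectangles. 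From these, the three answers follow with $O(1)$ arithmetic: \cut($e$,$e'$) is $w(T_e) + w(T_{e'})$ minus twice whichever cross-term applies; $e$ is cross-interested in $e'$ iff $w(T_e) < 2\,w(T_e, T_{e'})$; and $e$ is down-interested in $e'$ iff $w(T_e) < 2\,w(T_{e'}, T \setminus T_e)$.

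I do not expect any serious obstacle, only bookkeeping. The one subtlety worth highlighting is the mirrored point encoding: it is what allows both the ``independent'' (plain rectangle) and the ``descendant'' (rectangle-minus-rectangle) cases to be expressed uniformly as axis-aligned rectangle sums over a single point set, so that one range-sum structure suffices for all three query types. Total preprocessing cost is $O(m \log n)$, dominated by building the range-sum structure on $2m$ points, and each query is answered by $O(1)$ range sums plus constant arithmetic, in $O(\log n)$ time.
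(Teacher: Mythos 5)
Your construction is essentially the paper's: number the vertices by a DFS/Euler traversal so that each subtree $T_e$ is an interval, place each graph edge as a weighted planar point, and reduce $w(T_e,T_{e'})$ and $w(T_{e'},T\setminus T_e)$ to $O(1)$ axis-aligned rectangle sums on a structure with $O(m\log n)$ preprocessing and $O(\log n)$ queries (the paper uses Chazelle's structure where you use a persistent sweep; the bounds are the same, and your explicit mirrored-point encoding is a cleaner way to make a single rectangle query suffice in the independent case). The one step that would actually produce wrong numbers is your restriction to \emph{non-tree} edges when computing $w(T_e)$: since $T$ is a spanning tree of $G$, the tree edge $e$ itself crosses the cut $(T_e, V\setminus T_e)$ and its weight must be included in $w(T_e)$ (and hence in \cut($e$,$e'$) and in the interestedness thresholds); omitting it underreports every $2$-respecting cut by $w(e)+w(e')$, an error that varies across pairs and so corrupts the minimization. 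The fix is immediate — the same $+w,+w,-2w$ LCA trick applied to a tree edge $(u,\mathrm{parent}(u))$ correctly deposits a net $+w$ on $T_u$ and $0$ above it, so simply run it over \emph{all} edges of $G$. Excluding tree edges from the 2D point set, by contrast, is harmless: no tree edge ever joins $T_e$ to $T_{e'}$ for independent $e,e'$, nor $T_{e'}$ to $T\setminus T_e$ when $e'$ is a proper descendant of $e$, so the rectangle queries are unaffected.
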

\begin{proof}
In $O(n)$ time we construct a data structure that can answer lowest common ancestor queries on $T$ in constant time~\cite{HT1984}. 
For every node $v \in T$, let $\hat{v}\in [n]$ denote the visiting time of $v$ in a postorder traversal of $T$ and let $\hat{v}^\da$ denote the minimum visiting time of a node in the subtree of $T$ rooted at $v$. 
Let $w(v^\da)$ be the total weight of edges with exactly one endpoint in the subtree of $T$ rooted at $v$.
As also done by Karger~\cite{Karger}, in a bottom up fashion (in linear time) we compute $\hat{v}$, $\hat{v}^\da$, and $w(v^\da)$ for every $v \in T$. 
We map each edge $(u,v) \in G$ to the  point $(\hat{u},\hat{v})$ in the two-dimensional plane. On this set of $m$ points we construct Chazelle's {\em 2D orthogonal range searching} data structure~\cite{Chaz88}. This data structure is constructed in $O(m\log n)$ time and can report in $O(\log n)$  time the total weight of all points in any given axis-aligned rectangle. 

Consider any two edges $e$ and $e'$. Let $u$ and $v$ be the lower endpoints of $e$ and $e'$, respectively. Note that $w(T_e) = w(u^\da)$ and $w(T_{e'}) = w(v^\da)$.
Consider first the case that $e$ and $e'$ are independent. Deciding whether $e$ is cross-interested in $e'$ reduces to computing $w(T_e, T_{e'})$ which is obtained by a range query to the rectangle $[\hat{u}^\da,\hat{u}]\times[\hat{v}^\da,\hat{v}]$. The value \cut($e$,$e'$) is computed as $w(v^\da)+ w(u^\da)-2w(T_e,T_{e'})$. 

Now consider the case that $e'$ is a descendant of $e$.  Then deciding whether $e$ is down-interested in $e'$ reduces to computing $w(T_{e'},T \setminus T_e)$ which is obtained as the sum of the answers to the rectangles 
$[\hat{v}^\da,\hat{v}]\times[1,\hat{u}-1]$ and $[\hat{v}^\da,\hat{v}]\times[\hat{u}+1,n]$. The value \cut($e$,$e'$) is computed as $w(u^\da)+ w(v^\da)-2w(T_{e'},T\setminus T_e)$. 

Finally, if $e$ is a descendant of $e'$, then we always report that $e$ is down-interested in $e'$. The value \cut($e$,$e'$) is computed (symmetrically to the above) as $w(v^\da)+ w(u^\da)-2w(T_e,T\setminus T_{e'})$.  
\end{proof}

\paragraph{Interesting path pairs.}
Recall that the goal is two find the two tree-edges $\{e,e'\}$ that determine the minimum cut and we know that these edges belong to different heavy paths $P,Q \in \mathcal{P}$.  
A tree-edge $e$ is said to be {\em interested in a path $P$} in $\mathcal{P}$ if it is cross-interested or down-interested in some edge of $P$. Notice that by the above, any tree-edge $e$ is interested in only $O(\log n)$ paths.
Define a pair of paths $P,Q \in \mathcal{P}$ to be an {\em interesting pair} if $P$ has an edge interested in $Q$ and $Q$ has an edge interested in $P$. 

Notice that the number of interesting pairs $P,Q$ is only $O(n\log n)$.
However, Mukhopadhyay and Nanongkai do not identify all the interesting pairs. Instead, they
apply a complicated random sampling scheme in order to find the best pair with high probability. This sampling makes their algorithm randomized and its running time $O(m \log n+n\log^4 n)$. 
In Section~\ref{subsection:Finding} we show how to replace the random sampling step with a much simpler deterministic algorithm that finds {\em all} interesting pairs of paths. Our algorithm is also faster, taking $O(m \log n+n\log^2 n)$ time. Then, for each interesting pair $P,Q$, we (conceptually) contract all tree-edges except those in $P$ that are interested in $Q$ and those in $Q$ that are interested in $P$, and run the solution from Section~\ref{subsection:same} on the resulting paths. This last step is very similar to the corresponding step in~\cite{Danupon}. We explain it in detail in Section~\ref{subsection:Checking}.   

\section{The Simplification}\label{sec:Simplification}

For every edge $e\in T$, let $C_e$ ($D_e$) denote the path in $T$ consisting of all the edges that $e$ is  cross-interested (down-interested) in. The path $C_e$ ($D_e$) starts at the root and terminates at some node denoted $c_e$ ($d_e$). For every $e\in T$, we compute $c_e$ and $d_e$ in  $O(\log^2 n)$ time. In contrast to~\cite{Danupon}, we do this deterministically by using a {\em centroid decomposition}.
     
\subsection{Finding interesting path pairs}\label{subsection:Finding}
A node $v\in T$ is a centroid if every connected component of $T\setminus\{v\}$ consists of at most ${|T|}/{2}$
nodes. The centroid decomposition of $T$ is defined recursively by first choosing a centroid $v\in T$
and then recursing on every connected component of $T\setminus\{v\}$.
We assume $T$ is a binary tree (we can replace a node of degree $d$ with a binary tree of size $O(d)$ where internal edges have weight $\infty$ and edges incident to leaves have their original weight). We also assume we have a centroid decomposition of $T$ (we can compute a centroid decomposition of every tree in $O(n\log n)$ time so overall in $O(n\log^2 n)$ time). 
To compute $c_e$, consider the (at most) three edges $e_1,e_2,e_3$ incident to the centroid node. Using Lemma~\ref{lemma:Chazelle}, we check in $O(\log n)$ time whether $e$ is cross-interested in $e_1$, in $e_2$, and in $e_3$. From this we can deduce in which connected component $c_e$ lies, and we continue recursively there. Since the recursion depth is $O(\log n)$, we find $c_e$ after $O(\log^2 n)$ time so overall we spend $O(n \log^2 n)$ time. We compute $d_e$ similarly (querying Lemma~\ref{lemma:Chazelle} for down-interested rather than cross-interested).  

\subsection{Checking interesting path pairs}\label{subsection:Checking}

For each interesting pair of heavy paths $P,Q$, we will store a list of the edges of $Q$ that are interested in $P$ and vice versa. Recall that, since each edge is interested in $O(\log n)$ heavy paths, the number of interesting pairs of heavy paths is only $O(n\log n)$. Moreover, the total length of all the lists is also $O(n\log n)$.  

We first show how to compute a list of all interesting pairs of heavy paths. By going over all the edges of $T$ we prepare for each heavy path $P \in \mathcal{P}$ a list of all the heavy paths $Q$ s.t. an edge of $P$ is interested in an edge of $Q$. The total size of all these lists is $O(n\log n)$ and they can be computed in $O(n \log^2 n)$ time using Lemma~\ref{lemma:Chazelle}.
We then sort these lists (according to some canonical order on the heavy paths). Then, for every $P \in \mathcal{P}$ we go over all heavy paths $Q$ that $P$ is interested in. For each such $Q$ we determine in $O(\log n)$ time whether $Q$ is also interested in $P$ using binary search on the list of $Q$. Thus we construct the lists of all interesting pairs in total $O(n \log^2 n)$ time. 

For each interesting pair of paths $P,Q$, we construct a list of the edges of $Q$ that are interested in $P$ and vice versa as follows. We go over the edges $e$ of $T$. Let $Q$ be the heavy path containing $e$. 
For each heavy path $P$ that intersects $C_e$ or $D_e$, if $P,Q$ is an interesting pair, we add $e$ to the list of the pair $P,Q$. This takes $O(\log n)$ time since there are $O(\log n)$ such paths $P$, so the total time to construct all these lists is $O(n\log n)$.
Finally, we sort the edges on each list in $O(n\log^{2}n)$ total time.

For each interesting pair of paths $P,Q$, let $P'$ ($Q'$) denote the set of edges of $P$ ($Q$) that are interested in $Q$ ($P$).   
We find the minimum cut determined by pairs of edges $e,e'$ such that $e \in P'$ and $e' \in Q'$ in a single batch as follows.
We assume that either $P'$ is a descendant of $Q'$ (i.e. all edges in $P'$ are descendants of all edges in $Q'$) or that $P'$ is independent of $Q'$ (i.e. no edge in $P'$ is a descendant of an edge in $Q'$). Otherwise, if $P'$ is a descendant of one part of $Q'$ and independent of another part then we just split $Q'$ into two parts and handle each separately. We think of $P'$ as being oriented root-wards. If $P'$ is a descendant of $Q'$ then we orient $Q'$ root-wards, and if $P'$ is independent of $Q'$ then we orient $Q'$ leaf-wards.  
Let $M$ be the $|P'|\times |Q'|$ matrix where $M[i,j]$ is the weight of the  cut determined by the $i$'th edge of $P'$ and the $j$'th edge of $Q'$.
 We observe that the matrix $M$ is a Monge matrix (rather than \emph{Partial} Monge). That is, the Monge condition $M[i,j] - M[i,j+1] \ge M[i+1,j] - M[i+1,j+1]$ holds for any $i,j$ (and not only for $i\neq j$). This means that instead of the Klawe-Kleitman algorithm~\cite{KK89} we can use the SMAWK algorithm~\cite{SMAWK} that finds the maximum entry in $M$ by inspecting only a linear $O(|P'|+|Q'|)$ number of entries of $M$ (i.e. without the additional inverse-Ackermann term). Using Lemma~\ref{lemma:Chazelle} for each inspection, this takes $O((|P'|+|Q'|)\log n)$ time. Since the sum $\sum (|P'|+|Q'|)$ over all interesting pairs of paths is $O(n\log n)$, the overall time is $O(n\log^2 n)$.     

The proof that $M$ is Monge appears in~\cite[Claim 3.5]{Danupon}. We give one here for completeness.

\begin{lemma}	
$M[i,j] - M[i,j+1] \ge M[i+1,j] - M[i+1,j+1]$ for any $i,j$.
\end{lemma}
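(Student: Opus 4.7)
The plan is to reduce the inequality, separately in each of the two orientation cases considered in the preceding paragraph (descendant and independent), to the statement that twice the weight of a specific set of $G$-edges is nonnegative. Fix indices $i, j$ and consider the four tree edges $e_i, e_{i+1} \in P'$ and $e'_j, e'_{j+1} \in Q'$ simultaneously. Removing these four edges partitions $V$ into at most five cells, and every entry $M[a, b]$ with $a \in \{i, i+1\}$ and $b \in \{j, j+1\}$ is a sum of weights of $G$-edges going between pairs of these cells.

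Writing $f(A, B)$ for the total weight of $G$-edges with one endpoint in $A$ and the other in $B$, I would choose the cell labeling so that $Y$ is always the ``slice'' of $T$ lying between $e_i$ and $e_{i+1}$, and $W$ is the slice lying between $e'_j$ and $e'_{j+1}$. Concretely, in the descendant case ($P'$ strictly below $Q'$, both oriented root-wards) set
\[ X = T_{e_i},\quad Y = T_{e_{i+1}} \setminus T_{e_i},\quad Z = T_{e'_j} \setminus T_{e_{i+1}},\quad W = T_{e'_{j+1}} \setminus T_{e'_j},\quad U = V \setminus T_{e'_{j+1}}. \]
In the independent case ($Q'$ oriented leaf-wards, so that $e'_{j+1}$ lies below $e'_j$) set
\[ X = T_{e_i},\quad Y = T_{e_{i+1}} \setminus T_{e_i},\quad Z = T_{e'_{j+1}},\quad W = T_{e'_j} \setminus T_{e'_{j+1}},\quad U = V \setminus (T_{e_{i+1}} \cup T_{e'_j}). \]

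Using the two formulas for $\mathrm{cut}(e, e')$ derived in Section~\ref{sec:prelims} (one for descendant pairs, one for independent pairs), each of the four entries $M[a, b]$ expands as a sum of six of the $\binom{5}{2} = 10$ possible $f(\cdot, \cdot)$ terms. I would then tabulate the coefficient with which each term appears in each entry; summing, I expect every contribution to cancel in the $2 \times 2$ cross-difference except the edges between the two middle slices, yielding
\[
M[i,j] + M[i+1,j+1] - M[i,j+1] - M[i+1,j] \;=\; 2\, f(Y, W) \;\ge\; 0,
\]
which is equivalent to the claimed inequality.

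The main obstacle is keeping the two orientation conventions straight: since $Q'$ is oriented root-wards in the descendant case but leaf-wards in the independent case, one must select the region labels carefully so that $W$ consistently plays the role of the middle slice of $Q'$ (and similarly for $Y$ and $P'$). Once this is set up, the fact that only $f(Y, W)$ survives has a conceptual explanation: incrementing $i$ merely ``absorbs'' $Y$ into the $P'$-side of the partition, and likewise for $j$ and $W$, so any $f$-term not involving both $Y$ and $W$ contributes an amount that depends on at most one of the two indices and hence cancels in the cross-difference.
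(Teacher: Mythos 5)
Your proposal is correct and follows essentially the same route as the paper's proof: your five cells $X,Y,Z,W,U$ are the paper's components $C_1,\ldots,C_5$ (with $Y=C_2$ and $W=C_4$ the two ``middle slices''), and your identity $M[i,j]+M[i+1,j+1]-M[i,j+1]-M[i+1,j]=2f(Y,W)$ is exactly the paper's conclusion that the cross-difference equals $2C_{24}\ge 0$. The only cosmetic difference is that the paper fixes a single labeling convention covering both the descendant and independent orientations at once, whereas you spell the two cases out separately.
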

\begin{figure}[h!]
\begin{center}
\includegraphics[width=0.8\textwidth]{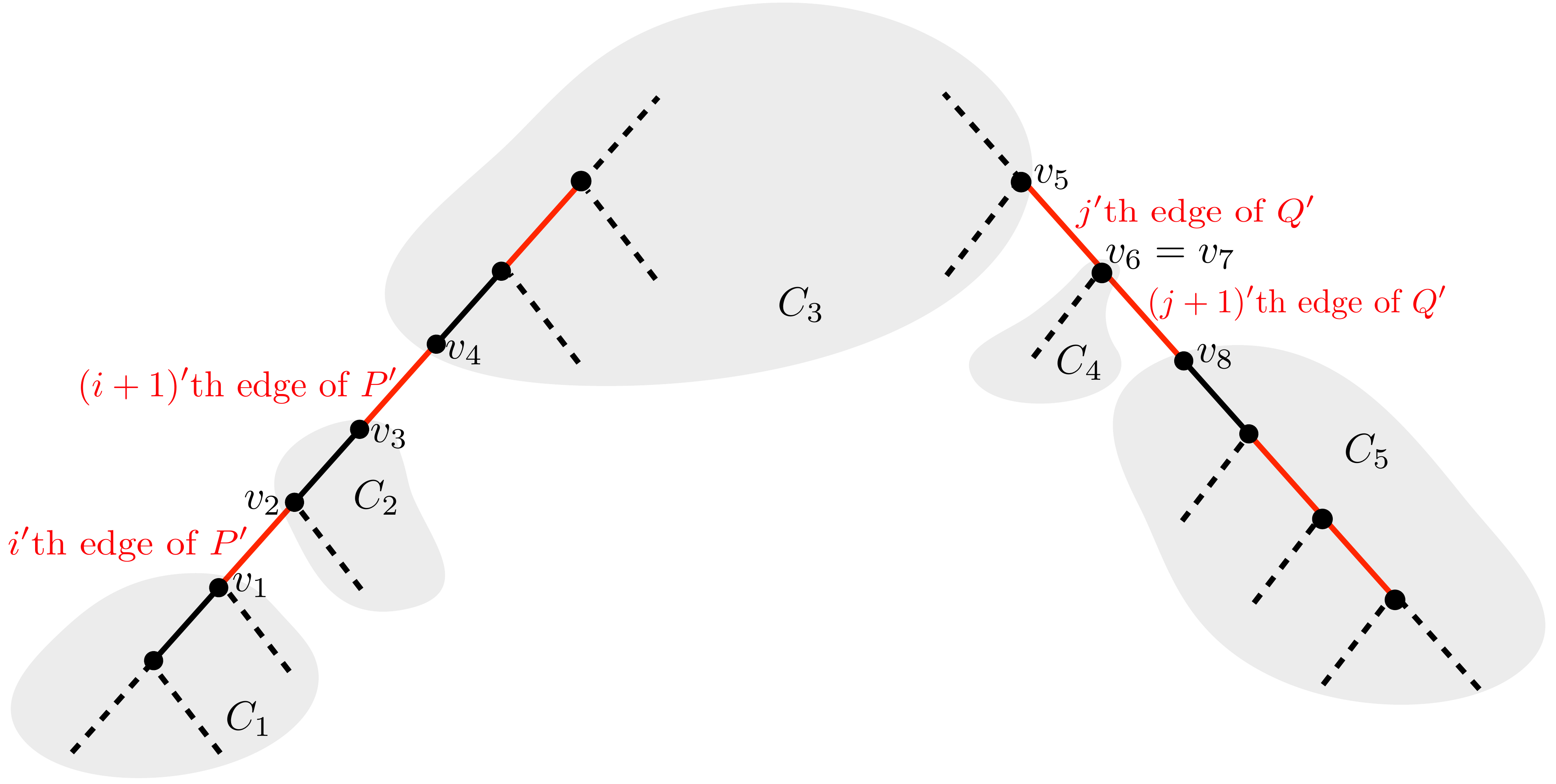}
\end{center}
\caption{Two independent heavy paths $P$ and $Q$ (solid edges). The red edges are $P'$ (oriented root-wards) and $Q'$ (oriented leaf-wards). The connected components $C_1,\ldots,C_5$ (shaded) are obtained by deleting the $i$'th and $(i+1)$'th edges of $P'$ and the $j$'th and $(j+1)$'th edges of $Q'$. \label{fig}}
\end{figure}

\begin{proof}

Recall that the order of edges in $P'$ is root-wards and that that the order of edges in $Q'$ is root-wards if $P'$ is a descendant of $Q'$ and leaf-wards if $P'$ is independent of $Q'$. 
With this order in mind, let $(v_1,v_2)$ and $(v_3,v_4)$ denote the $i$'th and $(i+1)$'th edges of $P'$ and let $(v_5,v_6)$ and $(v_7,v_8)$ denote the $j$'th and $(j+1)$'th edges of $Q'$ (it is possible that $v_i=v_{i+1}$ when $i$ is even). 
Let $C_1,\ldots,C_5$ denote the five connected components obtained from $T$ after removing these four edges, where $v_1\in C_1$, $v_2,v_3 \in C_2$, $v_4,v_5\in C_3$, $v_6,v_7\in C_4$, $v_8\in C_5$. See Figure~\ref{fig}. 
  Let $C_{ij}$ denote the total weight of all edges of $G$ between $C_i$ and $C_j$. Notice that\\
\begin{center}$\begin{aligned}
M[i,j]&= C_{12} + C_{13} + C_{24}+ C_{25} + C_{34}+ C_{35},\\ 
M[i,j+1] &= C_{12} + C_{13} + C_{14}+ C_{25} + C_{35}+ C_{45}, \\ 
M[i+1,j] &= C_{13} + C_{23} + C_{34} + C_{35},\\ 
M[i+1,j+1] &= C_{13} + C_{14} + C_{23}+ C_{24}+C_{35} + C_{45},\\
\end{aligned}$\end{center}
and since $C_{24}\ge 0$ we get that $M[i,j] - M[i,j+1] \ge M[i+1,j] - M[i+1,j+1]$.
\end{proof}

\section{Unweighted Graphs and Dense Graphs}\label{sec:Improvements}

The main advantage of the approach of Mukhopadhyay and Nanongkai~\cite{Danupon} 
is that for restricted graph families they can plug in range counting/reporting structures with faster construction time. 

\subsection{Unweighted graphs}
For unweighted graphs (with parallel edges),~\cite{Danupon} used a two dimensional orthogonal range counting structure with faster preprocessing~\cite{CP10} and a data structure of~\cite{BGKS15} to devise a two dimensional orthogonal range sampling/reporting data structure with faster preprocessing. They plugged these improved data structures into their  algorithm for 2-respecting min cut, to obtain a running time of $O(m\sqrt{\log n}+n\log^{4}n)$ (multiply this by another $\log n$ factor for the running time of the resulting min cut algorithm). We show that an analogous  speedup can
be applied to our simplification (leading to an $O(m\sqrt{\log n}+n\log^{2}n)$ time algorithm). In fact, we only need the following range counting structure~\cite{CP10}:

\begin{lemma}[\cite{CP10}]
\label{lemma:chanp}
Given $m$ points in the 2D plane, we can construct in $O(m\sqrt{\log m})$ time a range counting structure
with $O(\log m/\log\log m)$ query time.
\end{lemma}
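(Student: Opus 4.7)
The plan is to reduce 2D range counting to 1D prefix-rank queries on the $y$-coordinates of an $x$-sorted sequence, and then to build a wavelet-tree-style structure whose parameters are tuned so that both the preprocessing and query bounds are met simultaneously using word-RAM bit parallelism. First I would sort the $m$ input points by $x$-coordinate using a fast integer sorting algorithm; in the word-RAM model this can be done well within $O(m\sqrt{\log m})$ time. A query on the rectangle $[x_1,x_2]\times[y_1,y_2]$ then asks, for a contiguous range $[i,j]$ of the sorted sequence, how many of the corresponding $y$-values lie in $[y_1,y_2]$, which by the usual four-way additive decomposition reduces to $O(1)$ prefix-rank queries of the form ``given a prefix $y_1,\dots,y_p$ of the sequence, how many entries are $\le y$.''

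Next I would build a balanced wavelet-tree-like structure on the $y$-coordinates with branching factor $B$ chosen so that the depth is $\log_B m = O(\log m/\log\log m)$, i.e.\ $B = \Theta(\log^{\varepsilon} m)$. Each node stores, in a packed representation, the sequence of child-labels (a number in $[B]$) of the elements in its range. A single descent step at a node consists of (i) determining which child contains the query position, and (ii) computing the partial rank contributed by the current level. Both operations can be performed in $O(1)$ time by indexing into a precomputed lookup table whose keys encode $\Theta(\sqrt{\log m})$ packed labels; such a table has size $2^{o(\log m)}$ and can be built in $o(m)$ time. Summing over the $O(\log m/\log\log m)$ levels gives the claimed query time.

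For construction, the key point is that each level of the tree can be built in $O(m/\sqrt{\log m})$ time by a bit-parallel stable partitioning step: using packed-word sorting on the short integers in $[B]$, one can route $\Theta(\sqrt{\log m})$ entries per machine word in constant time. The total construction time then becomes the number of levels times $O(m/\sqrt{\log m})$, which must be balanced against the $O(m\sqrt{\log m})$ sorting step; with an appropriate choice of $B$ and of the packed-word width, this yields the target $O(m\sqrt{\log m})$ bound.

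The main obstacle is coordinating three parameters at once: the branching factor $B$, the number of items packed per word, and the size of the precomputed universal lookup tables. All three have to be set so that (a) the lookup tables fit within an $o(m)$ preprocessing budget, (b) each level can be both constructed and queried in the required amortized time, and (c) the global bounds $O(m\sqrt{\log m})$ and $O(\log m/\log\log m)$ fall out simultaneously. This is precisely the combination of packed sorting and fractional cascading / rank-select machinery developed in the cited work, and for the purposes of this paper I would simply invoke it as a black box.
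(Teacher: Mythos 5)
The paper gives no proof of this lemma at all—it is imported verbatim from Chan and P\u{a}tra\c{s}cu \cite{CP10}—and you likewise end by invoking that work as a black box, so your treatment matches the paper's. Your intervening sketch (reduction to prefix-rank, a wavelet-tree with branching factor $\Theta(\log^{\varepsilon}m)$ giving depth $O(\log m/\log\log m)$, packed-word lookup tables for $O(1)$ work per level, and bit-parallel stable partitioning so that the $O(\log m/\log\log m)$ levels cost $O(m/\sqrt{\log m})$ each) is a faithful high-level account of how that cited result is actually obtained.
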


We use Lemma~\ref{lemma:chanp} instead of Chazelle's structure in the proof of Lemma~\ref{lemma:Chazelle}.
This decreases the overall running time to $O(m\sqrt{\log m}+n\log^{2}n+n\log n\log m/\log\log m)$. If $m\leq n^{2}\log n$, this is
$O(m\sqrt{\log n}+n\log^{2}n)$. Otherwise, we  replace the unweighted graph $G$ by a new weighted graph $G'$
with only $n^{2}$ edges (by collapsing parallel unweighted edges into a single weighted edge), and run the
previous algorithm in $O(n^{2}\log n+n\log^{2}n)=O(m+n\log^{2}n)$ time. This gives us
an $O(m\sqrt{\log n}+n\log^{2}n)$ time deterministic algorithm for 2-respecting min cut, and an $O(m\log^{3/2} n+n\log^{3}n)$ time randomized algorithm for min cut for unweighted undirected graphs.

\subsection{Dense weighted graphs}

We now present another speedup that can be applied to dense (weighted) graphs with $m = \Omega( n^{1+\epsilon})$, for
any $\epsilon>0$. For such graphs, we obtain  an  
$O(m\log n+n^{1+\epsilon})$-time algorithm for min cut.
We need the following structure:

\begin{lemma}
\label{lemma:fanout}
For any $\epsilon>0$, given $m\geq n$ weighted points in the $[n]\times [n]$ grid, we can construct in $O(m)$
time a data structure that reports the total weight of all points in any given rectangle $[x_{1},x_{2}]\times [y_{1},y_{2}]$
in $O(n^{\epsilon})$ time.
\end{lemma}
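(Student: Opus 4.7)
The plan is to use a classical high-fanout range tree. Fix $\delta = \epsilon/2$ and build a balanced $b$-ary tree $\mathcal{T}$ over the $x$-coordinate range $[n]$ with fanout $b = \lceil n^{\delta}\rceil$. This tree has $O(n)$ nodes and depth $d = O(\log_{b} n) = O(1/\delta)$, which is a constant depending only on $\epsilon$. At each node $v$ of $\mathcal{T}$, I will store the points whose $x$-coordinate lies in $v$'s range, sorted by $y$-coordinate, together with prefix sums of their weights.

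To build the structure in $O(m)$ time, first bucket-sort all $m$ points by $y$-coordinate: this takes $O(m+n)=O(m)$ time because $y$-coordinates lie in $[n]$ and $m \geq n$. Then for each of the $d=O(1)$ levels of $\mathcal{T}$, distribute the (already $y$-sorted) points to the appropriate nodes at that level according to their $x$-coordinate. Since the distribution is a stable bucket partition on $x$, each node's list inherits the global $y$-order for free, so no further sorting is needed. Each level costs $O(m)$, so the total preprocessing time (including the prefix sums) is $O(m\cdot d)=O(m)$.

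For a query $[x_{1},x_{2}]\times [y_{1},y_{2}]$, identify the canonical set of $\mathcal{T}$-nodes whose $x$-ranges partition $[x_{1},x_{2}]$; as in a standard $b$-ary range tree, this set has size $O(b\cdot d)=O(n^{\delta}/\delta)$. For each canonical node $v$, binary-search $v$'s sorted array to locate the extreme $y$-values inside $[y_{1},y_{2}]$ and subtract prefix sums to obtain the total weight of $v$'s points in that strip. Each such per-node query costs $O(\log m)=O(\log n)$ after an $O(m)$-time preprocessing that collapses duplicate grid points so that $m\leq n^{2}$. Summing over canonical nodes gives the answer in time $O(n^{\delta}\log n/\delta)=O(n^{\epsilon})$ for $n$ large enough (small $n$ being trivial).

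The only real subtlety is achieving $O(m)$ preprocessing rather than the $O(m\log n)$ that a plain range tree would incur. Two ingredients eliminate the logarithmic factor: bucket-sorting in $y$ is possible because the coordinates are integers in $[n]$, and choosing fanout $b = n^{\delta}$ forces the depth of $\mathcal{T}$ to be a constant, so the per-level $O(m)$ distribution cost does not multiply out to a logarithm. Everything else -- the canonical-set traversal, the prefix sums, and the per-node binary searches -- is standard.
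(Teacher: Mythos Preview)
Your proof is correct and rests on the same key idea as the paper's: a range tree with fanout $n^{\Theta(\epsilon)}$ has constant depth, so the per-level $O(m)$ work in construction does not blow up to $O(m\log n)$. The difference is only in how the second coordinate is handled. The paper nests a second $B$-ary tree (on the other coordinate) at every node and answers a query by scanning children at each level of both trees, giving $O(n^{2\epsilon})$ query time before halving~$\epsilon$. You instead keep a $y$-sorted array with prefix sums at each node and answer the inner query by binary search; this is arguably a touch simpler, but it obliges you to first collapse duplicate grid points so that $\log m = O(\log n)$, a step the paper's fully $B$-ary variant does not need. Either route yields the stated $O(m)$ preprocessing and $O(n^{\epsilon})$ query time.
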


\begin{proof}
It is enough to construct a structure capable of reporting the total weight of all points in $[x,n]\times [y,n]$.
We use the standard approach of decomposing a 2D query into a number of 1D queries.

We start by designing a 1D structure storing a set $S$ of weighted numbers (weighted points in 1D) from $[n]$ that can be constructed in $O(|S|)$
time and returns the total weight of all numbers in $[x,n]$ in $O(n^{\epsilon})$ time.
Consider a complete tree $T$ of degree $B=n^{\epsilon}$ over the set of leaves $[n]$. Note that the depth of $T$ is $O(1/\epsilon)$. We construct and
store the subtree $T_{S}$ of $T$ induced by the leaves that belong to $S$. This takes $O(|S|)$ time and space and, assuming
that $S$ is given sorted, we can assume that the children of each node of $T_{S}$ are sorted.
Each node of $T_{S}$ stores the total weight of all numbers corresponding to its leaves.
Then, to find the total weight of numbers in $[x,n]$, we traverse $T_{S}$ starting from the root. 
Let $u$ be the current node of $T_{S}$. We scan the (at most) $B$ children of $u$ from right to left and,
as long as all leaves in their subtrees correspond to numbers from $[x,n]$, we add their stored
total weight to the answer. Then, if the interval of the leaves corresponding to the next child
intersects $[x,n]$ (but not entirely contained in $[x,n]$), we recurse on that child. Overall, there are at most $1/\epsilon$ steps, each taking
$O(B)$ time, so $O(n^{\epsilon})$ overall.

Our 2D structure for a set of $m$ weighted points from $[n]\times [n]$ uses the same idea.
We consider a complete tree $T$ of degree $B=n^{\epsilon}$ on the $y$ coordinates. We construct
and store the subtree $T'$ of $T$ induced by the $y$ coordinates of the points. At each
node $v$ of $T'$ we store a 1D structure responsible for all the points whose $y$ coordinate corresponds
to a leaf in the subtree of $v$. Overall, each point is stored at $1/\epsilon$ nodes of $T'$.
By first sorting all the points in $O(n+m)=O(m)$ time with radix sort we can assume that the points received
by every 1D structure are sorted, and construct all the 1D structures in $O(1/\epsilon\cdot m)=O(m)$ total time. 

Then, a query
 $[x,n]\times [y,n]$ is decomposed into $1/\epsilon\cdot B = O(n^{\epsilon})$
queries to the 1D structures by proceeding as above: we descend from the root, scanning the children
of the current node from right to left, issuing a 1D query to every child corresponding to
a $y$ interval completely contained in $[y,n]$, and then possibly recursing on the next child if its
$y$ interval intersects $[y,n]$. Each 1D query takes $O(n^{\epsilon})$ time, and there are $O(1/\epsilon\cdot n^{\epsilon})$
queries, so overall the query time is $O(n^{2\epsilon})$. By adjusting $\epsilon$ we obtain the lemma.
\end{proof}

By replacing Chazelle's structure with Lemma~\ref{lemma:fanout}, we obtain an algorithm with running time
$O(m+n\log^{2} n+n^{1+\epsilon}\log n)$. Because $m\leq n^{2}$, 
by adjusting $\epsilon$, this implies an $O(m\log n+n^{1+\epsilon})$-time algorithm for min cut for any constant $\epsilon>0$.

\bibliographystyle{plainurl}

\end{document}